\DeclareMathOperator{\dotlr}{\overset{\centerdot}{\leftrightarrow}}
\DeclareMathOperator{\gal}{\text{gal}}
\DeclareMathOperator{\Set}{\mathit{Set}}
\DeclareMathOperator{\Sms}{\mathit{Sms}}
\DeclareMathOperator{\Fin}{\mathit{Fin}}
\newcommand*{\FN}{\mathit{FN}}
\let\uhr\upharpoonright
\renewcommand*{\upharpoonright}{\hspace{-.07cm}\uhr\hspace{-.07cm}}
\newtheorem{thm}{Theorem}
\newtheorem{prop}{Proposition}
\newtheorem{cor}{Corollary}
\newtheorem{lemma}{Lemma}
\theoremstyle{remark}
\theoremstyle{definition}
\newtheorem{defn}{Definition}
\newtheoremstyle{axiom1}
{3pt}
{3pt}
{\itshape}
{}
{\bfseries\itshape}
{}
{.5em}
{}
\newtheoremstyle{axiom2}
{3pt}
{3pt}
{\itshape}
{}
{\bfseries\itshape}
{\\}
{.5em}
{}
\theoremstyle{axiom1}
\title{An Alternative Set Model of Cognitive Jump}
\author{Kiri Sakahara\thanks{Yokohama National Universuty and Kanagawa University, Kanagawa, Japan.} \and Takashi Sato\thanks{Toyo University, Tokyo, Japan.}}
\date{}
\begin{document}

\maketitle

\begin{abstract}
When we enumerate numbers up to some specific value, or, even if we do not specify the number, we know at the same time that there are much greater numbers which should be reachable by the same enumeration, but indeed we also congnize them without practical enumeration. Namely, if we deem enumeration to be a way of reaching a number without any ``jump'', there is a ``jump'' in our way of cognition of such greater numbers.

In this article, making use of a set theoretical framework by Vop\v{e}nka \cite{ast} (alternative set theory) which describes such structure, we attempt to shed light on an analogous sturucture in human and social phenomenon. As an example, we examine a problem of common knowledge in electronic mail game presented by Rubinstein \cite{rubinstein89}. We show an event comes to common knowledge by a ``cognitive jump''. 
\end{abstract}

\section{Introduction}
Asked about how many sands are there in a shore, what would the answer be?
Some may figure out a rough estimate effortlessly.
But most of us would not be able to give even a single guess and feel that it is too many to count, almost as infinite.

Every time we try to count the very huge pile of objects, this kind of feeling would be revived. 
The word \textit{infinity} may not express any estimates nor specific numbers but do this impossibility itself of achieving the task within any given time.
It may be possible to count them all somehow, but, in most cases, we helplessly jump to the conclusion that there is infinite grain of sands.

It is this kind of impossibility which marks not only our own intuitive understanding of it but the way mathematics construct infinity in its system.
Infinite sets are understood to be the ones which cannot be reached by any inductive operations. 
To be more precise, it is stated that there exists an \textit{inductive set} which includes all successors of its elements besides empty set, so that the set cannot be successors of any elements of it and, thus, is closed under induction.

While this construction opens the way to deal with infinite calculus in set environment, 
some problematic sides also surface gradually when they are applied to model some real-life situations.
The one we focus here is the problem how it is possible that some informations to become common knowledge.

This problem has been considered and dealt within a standard set environment, or Zermelo-Fraenkel set theory.
However, it seems that the problem stems not from the way we handle the problem, but from the very way a standard set theory deals with infinity itself.
In fact, by reconsidering the way we handle infinity from an alternative view, it is possible to solve the problem very smooth and natural way as we shall see.

In the present paper, we examine one of the problematic sides of how a standard set theory deals with infinity along with the problem of common knowledge building and offer a new way to handle infinite phenomenons.

\section{Preliminaries}
Our framework is built in a universe of alternative set theory (AST), which is originated and developed by Vop\v{e}nka \cite{ast} to construct a set theoretical environment in which one can deal with the phenomena involving infinitely many repetitions in accordance with one's own intuitive understanding, or phenomenological view \cite{ast}, of them.
Having unique and attractive structure,
it is almost forgotten, so we first overview its characteristic features and corresponding axioms before describing our framework\footnote{We consult Vop\v{e}nka and Trlifajov\'{a} \cite{encycro-ast} as a reference here.}.

AST differs from Zermelo-Fraenkel set theory (ZF) mainly in
the way it treats infinity. 
While ZF treats infinity as, say, \textit{actual infinity}, AST deems it as \textit{subjective infinity}, or \textit{natural infinity} \cite{vop2} in his own words.
This difference is reflected in the way it constructs hierarchy between sets and classes.

ZF postulates an axiom of infinity, which assumes the existence of inductive set, which contains all successors of its elements, while AST does not.
Instead, AST requires existence of infinite proper classes, so-called \textit{semisets}, and infinite sets are defined as the sets which include semisets as its subclasses.

This unique structure permits AST to model non-standard framework of arithmetic, which include, for example, \textit{huge} natural numbers, in an intuitively sound way.
Huge natural numbers are defined as the number which is bigger than any finite natural number.
However, the entire collection of finite natural number cannot be, or at least seems not to be, described explicitly, so that the class of \textit{all} finite natural numbers cannot form a set, it remains only a class.
Consequently, every huge number, which itself is a set, must include a proper class as its subclass, or, a semiset.

In fact, the predicate \textit{huge} can properly be used only when some subjects are there to use it in a proper manner.
This subjective aspect of hugeness can only be properly grasped when we regard infinity as some sort of subjective notion.
Our framework depends on this structure: the existence of the huge numbers, or, infinity in our daily life.
Taking into account of this structure, we cast a new light on how to interpret our own judgement made in everyday life, and propose a new way to model these judgement.

We start with the axiom of existence of sets.
For notational ease, we use a notation $\Set(x)$ to mean ``$x$ is a set".
\newtheorem*{ax03}{Axiom of existence of sets (empty set and set-successors)}
\begin{ax03}
\[
\Set(\emptyset)\wedge\left(\forall x\right)\left(\forall y\right)\Set(x\cup\{y\})
\]
\end{ax03}

To introduce further axioms we formally define \textit{set-formulas} as usual.
Set-formulas are built up from two types of set-formulas $x = y$ and $x\in y$, where $x$ and $y$ are sets, by means of five connectives $\left(\varphi \wedge \psi \right),\ \left(\varphi\vee \psi \right),\ \left(\varphi\Rightarrow \psi \right),\ 
\left(\varphi\equiv\psi \right)$, $\neg\left(\varphi\right)$ and two quantifiers $\left(\exists x\right)\varphi,\ \left(\forall x\right)\varphi$, where $\varphi$ and $\psi$ are set-formulas.

\theoremstyle{axiom2}
\newtheorem*{ax04}{Axiom of induction}
\begin{ax04} 
Let $\varphi\left(x\right)$ be a set-formula. 
Then 
\[
\varphi\left(\emptyset\right) \wedge \left(\forall y\right)
\bigl(\varphi\left(x\right)\Rightarrow\varphi\left(
x\cup\{y\}\right)\bigr)\ \Rightarrow\ \left(\forall x\right)\varphi\left(x\right)
\]
\end{ax04}

Sets in AST are generated inductively by these axioms in a concrete way.
Alongside with these axioms, AST also introduces extensionality for sets and regularity, while pairing, union, power set, schema of separation and replacement axioms are omitted, since these axioms can be derived from these axioms.
Notice that the axioms for sets coincide with that of ZF's except that the axiom of infinity is negated as we saw above.

In contrast, classes are only postulated.
\newtheorem*{ax07}{Axiom of existence of classes}
\begin{ax07}
For each property $\varphi$, there exists the class $\{x\,;\, \varphi(x)\}$
\end{ax07}

If $\varphi(x)$ is a set-formula then we say that the class $\{x\,;\, \varphi(x)\}$ is \textit{set-theoretically definable}.
For classes, the axiom of extensionality is also stated.
We denote classes in upper case $X, Y,...$ and sets in lower case $x, y,...$.

Semisets are defined as subclasses of sets.
The existence of the special kind of semisets is guaranteed by the next axiom.
The notation $\Sms(X)$ means ``$X$ is a semiset."
\theoremstyle{axiom2}
\newtheorem*{ax10}{Axiom of existence of proper semisets}
\begin{ax10}
There is a proper semiset. In Symbols,
\[
\left(\exists X\right)\left(\Sms\left(X\right) \wedge \neg\Set\left(X\right)\right).
\]
\end{ax10}

The existence of proper semisets implicitly premises some largeness of sets which include proper semisets as their subclasses.
If sets are small enough to look inside and check their contents, they can be identified, and every subclasses must be sets.
On the other hand, when a set consists of infinitely many contents, exactly how many elements are included remains indeterminate, so that some subclasses cannot be sets: they must be semisets.

The finiteness and infinity in AST are defined in accordance with this intuition.
A class $X$ is \textit{finite} (notation: $\Fin(X)$) iff each subclass of $X$ is a set.
On the other hand, a class is \textit{infinite} iff it includes proper semisets.
Additionally, a class $X$ is called \textit{countable} iff $X$ is infinite class with a linear ordering $\leq$ such that each segment $\{y\in X\,|\, y\leq x\}$ is finite.

\newtheorem*{ax11}{The prolongation axiom}
\begin{ax11}
For each countable function $F$, there is a set function $f$ such that $F\subseteq f$.
\end{ax11}

This axiom leads to existence of infinite sets.
Since every function whose domain is proper semiset can be prolonged to the function whose domain is the set which include the semiset as its subclass, the domain of this function forms an infinite set.
But cardinality of infinite sets is differentiated from that of semisets.
This premise is stated in the next axiom.
To state the axiom, we introduce a relation between two classes.
Two classes $X$, $Y$ are \textit{equivalent} iff there is a one-one mapping $X$ onto $Y$, i.e. $X\approx Y$.

\newtheorem*{ax13}{Axiom of cardinalities}
\begin{ax13}
For all infinite classes $X$, $Y$ if both $X$ and $Y$ are not countable then they are equivalent.
This cardinality is called uncountable.
\end{ax13}

Evidently from the axiom, any infinite class is countable or uncountable.
In addition to them, the axiom of choice is also adopted.

In this environment of AST, a model of Peano arithmetic can be built.
The class of \textit{natural numbers} $N$ is defined, as usual, in the von Neumann way.
\[
N\ =\ \left\{x\ ;\ 
\begin{matrix}
\left(\forall y\in x\right) \left(y\subseteq x\right)\\
\wedge\left(\forall y,z\in x \right) \left(y\in z \vee y=z \vee z\in y\right)
\end{matrix}
\right\}
\]

The class of \textit{finite natural numbers} $\FN$ consists of the numbers represented by finite sets.
\[
\FN \ =\ \left\{x\in N\ ;\ 
\Fin(x)\right\}
\]
We also name $N\setminus\FN$ as the class of \textit{huge natural numbers}.

Apparently from the definition, $\FN$ is countable, and each huge natural number, which include $\FN$ as its semiset, is an uncountable set. 
It is also evident that $N$ is an uncountable proper class.
Since $\FN$ is a semiset, $N\setminus\FN$ has no least element with respect to $\subseteq$ such as $\omega$, the least transfinite ordinal number of ZF.

Lastly, we introduce a structure having essentially the same feature of semisets:
a $\sigma$-equivalence \cite{gurican}.
$\sigma$-equivalence is defined as the equivalence generated from the union of countably many set-theoretically definable classes.
It is introduced originally, in Guri\v{c}an and Zlato\v{s} \cite{gurican}, as topological structure to decide whether any given pair of sets are accessibile or not, and named as \textit{accessibility equivalence}.
We use this equivalence later to assess accessibility between two states.

\begin{defn}
A class $X$ is a $\sigma$-class if $X$ is the union of a countable sequence of set-theoretically definable classes.
A class $\dotlr$ is a \textit{$\sigma$-equivalence} if $\dotlr$ is a $\sigma$-class and an equivalence relation.
A sequence $\left(R_n\right)_{n\in \FN}$ is a \textit{generating sequence} of an equivalence $\dotlr$ iff the following conditions hold:
\begin{enumerate}[(1)]
\item For each $n$, $R_n$ is a set-theoretically definable, reflexive, and symmetric relation.
\item For each $n$ and each $x$, $y$, $z$, $\langle x,y\rangle\in R_{n}$ and $\langle y,z\rangle\in R_{n}$ implies $\langle x,z\rangle\in R_{n+1}$ ($R_n\circ R_n\subseteq R_{n+1}$) ;$\,R_0=\{\langle x,x\rangle; \text{ for all } x\}$.
\item $\dotlr$ is the union of all the classes $R_n$.
\end{enumerate}
\end{defn}

Even when each relation $R_n$ is not an equivalence, the union of them becomes an equivalence (since for every couple of pairs $\langle x,y\rangle\in R_n$ and $\langle y,z\rangle\in R_n$ there always exists $\langle x,z\rangle\in R_{n+1}$, thus transitivity is satisfied).
An equivalence $\dotlr$ is called \textit{totally disconnected} iff it is generated by an equivalence relation $S_n$ for each $n$.

Given a $\sigma$-equivalence $\dotlr$, the equivalent class of any given set $x$ can be formed, which is called a \textit{galaxy} of a set $x$ denoted as $\gal(x)\equiv\{y\in X\,;\,x\dotlr y\}$.
Almost immediately from the definition, galaxies are $\sigma$-class.
 
When the equivalence $\dotlr$ is totally disconnected, all galaxies are separated; there are no continuum line, or $\dotlr$-chain, 
between two different galaxies.
On the other hand, if it is not, there may be a $\dotlr$-chain starting from any given set $x$ and reaching out to the other set $y$ which is not in the galaxy $\gal(x)$ anymore.

It is noteworthy that there appears to be one of the characteristic, and little bit strange, structures of AST present here.
Provided that transitivity is valid, every element contained in one galaxy cannot go outside of it no matter how far it traces along any sequence of any chain under the normal ZF settings since it is equivalent to the first one.
But in AST, it is possible that the first one is not equivalent anymore after it traces huge steps.
This is the structure we focus on in the present paper.
We will explore its characteristics in the next section.

\section{Sorites Relation}\label{sorites}
To highlight the feature of the accessibility equivalence more concrete way, we make use of \textit{sorites relation}, mentioned in Tsujishita \cite{tjst}.
Let $R$ be a binary relation.
A sequence $(a_1,\ldots ,a_\alpha)$ where $\alpha\in N\setminus\FN$ is called an \textit{$R$-chain} if $a_iRa_{i+1}$ for $i\in\alpha$ (and $a_i\ne a_j$ for all $i\ne j$).

\begin{defn}
$R$ is called a \textit{sorites relation} iff $R$ is equivalence relation and there is an $R$-chain $(a_1,\ldots,a_\alpha)$ that is $\neg(a_1 R\, a_\alpha)$ for some $\alpha\in N\setminus\FN$.
\end{defn}

Then, our claim is stated as the next theorem.

\begin{thm}
There is a $\sigma$-equivalence $\dotlr$ that is a sorites relation.
\end{thm}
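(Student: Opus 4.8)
The plan is to take for $\dotlr$ the ``finite-distance'' equivalence on the class $N$ of natural numbers --- whose galaxies are the collections of numbers lying a finite ``distance'' apart --- and then to produce the required chain simply by walking along $N$ one unit at a time from a finite number up to a huge one. Concretely, for each $n\in\FN$ I would set
\[
R_n\;=\;\bigl\{\langle x,y\rangle\in N\times N\;;\;|x-y|<2^{\,n}\bigr\}\;\cup\;\{\langle x,x\rangle\;;\;\text{for all }x\},
\]
where $|x-y|$ is the natural-number distance, i.e.\ the unique $k\in N$ with $x=y+k$ or $y=x+k$. Each $R_n$ is visibly reflexive and symmetric, $R_0=\{\langle x,x\rangle\}$ because $2^{\,0}=1$, and from $|x-z|\le|x-y|+|y-z|$ one gets $R_n\circ R_n\subseteq R_{n+1}$ (two distances $<2^{\,n}$ add up to something $\le 2^{\,n+1}-2<2^{\,n+1}$). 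The one thing that really needs checking is that $R_n$ is \emph{set-theoretically definable}: this amounts to noting that ordinal addition and the map $k\mapsto 2^{\,k}$ on $N$ are given by set-formulas, each value being read off a finite sequence, which in AST is a set precisely because its length lies in $N$. (The extra diagonal over all sets is included only so that $R_0$ is literally the full identity, as the definition of a generating sequence requires; it is harmless, since outside $N$ every set is related only to itself.)

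Granting this, $(R_n)_{n\in\FN}$ is a generating sequence, so $\dotlr:=\bigcup_{n\in\FN}R_n$ is an equivalence relation by the remark following that definition; and being a countable union of set-theoretically definable classes it is a $\sigma$-class. Hence $\dotlr$ is a $\sigma$-equivalence. Unwinding the definition, $x\dotlr y$ holds iff $|x-y|\in\FN$: if $|x-y|<2^{\,n}$ with $n\in\FN$ then $|x-y|$ lies below the finite number $2^{\,n}$ and so is finite, while conversely $m<2^{\,m}$ for every $m\in\FN$ (induction), so any finite distance is captured by some $R_n$. Transitivity of $\dotlr$ is just the fact that a sum of two finite natural numbers is finite, again an instance of the induction axiom.

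It remains to exhibit the sorites chain. A huge natural number exists --- for instance, the prolongation axiom applied to $\mathrm{Id}\upharpoonright\FN$ yields a set containing $\FN$, and its supremum cannot be finite --- so fix $\alpha\in N\setminus\FN$ and consider the sequence given by $a_i=i$ for $i\le\alpha$. This is a legitimate object: it is the restriction of the identity to the set $\alpha+1$, hence a set by the (derivable) separation schema. Its terms are pairwise distinct and satisfy $|a_i-a_{i+1}|=1\in\FN$, so $a_i\dotlr a_{i+1}$ for every $i\in\alpha$, i.e.\ it is an $\dotlr$-chain. But $|a_1-a_\alpha|=\alpha-1$, and $\alpha-1\notin\FN$ --- otherwise $\alpha=(\alpha-1)\cup\{\alpha-1\}$ would be finite, since $\FN$ is closed under successor --- so $\neg(a_1\dotlr a_\alpha)$. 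Thus $\dotlr$ is a sorites relation. I expect the main obstacle to be the definability check flagged in the first paragraph: confirming that these ostensibly infinitary operations on $N$ are expressible by set-formulas in the absence of an axiom of infinity; everything else (finite $+$ finite is finite, $m<2^{\,m}$ on $\FN$, closure of $\FN$ under successor) is a routine induction.
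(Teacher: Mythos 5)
Your construction is essentially the paper's own: the same generating sequence $R_n$ defined by distance $<2^{n}$, the same triangle-inequality check that $R_n\circ R_n\subseteq R_{n+1}$, and the same choice of a huge $\alpha$ to defeat transitivity at the endpoints (the paper works with an abstract sequence $(a_i)$ indexed by $N$ and picks $\alpha=2^{\gamma}$, while you instantiate $a_i=i$ on $N$ itself, but this is cosmetic). Your additional care about set-theoretic definability and the existence of huge numbers goes beyond what the paper bothers to verify, but does not change the argument.
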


\begin{proof}
Let us start with defining a distance between two elements $a_\ell$ and $a_m$ of a sequence $\left(a_i\right)_{i\in N}$ as $||\alpha_\ell,\alpha_m||=|m-\ell|$.
We define a relation $R_n$ 
for each $n\in \FN$ as 
\[
R_n\ \equiv\ \{\langle a_i,a_j\rangle\,;\, ||a_i,a_j||<2^{n}\}.
\]
It is obvious that $\bigcup_{ n\in \FN} R_n$ is a $\sigma$-class and $\{R_n\,;\, n\in \FN\}$ is a generating sequence of an equivalence $\dotlr$, since for every $i,j$, $||a_i,a_j||=||a_j,a_i||$, thus every $R_n$ satisfies symmetry. Since $||a_i,a_i||=0\in \FN$, it also satisfies reflexivity.
Let $\langle x,y\rangle\in R_{n}$ and $\langle y,z\rangle\in R_{n}$.
By definition of $R_{n}$, $||x-y||<2^{n}$ and $||y-z||<2^{n}$, thus, $||x-z||\leq ||x-y||+||y-z||<2\times(2^{n})=2^{n+1}$.
It means $\langle x,z\rangle \in R_{n+1}$.

Let us next confirm the union of all the classes $R_n$ constructed above is a sorites relation.
Let $\alpha$ be a huge natural number which satisfies $\alpha=2^\gamma$ for some $\gamma\in N\setminus\FN$.
For any $i\in\alpha$, $a_i\dotlr a_{i+1}$ by definition.
But $||a_1,a_\alpha||=2^{\gamma}-1$, thus $\langle a_1,a_\alpha\rangle\notin R_i$ for all $i\in\FN$,
and $\neg(a_1\dotlr a_\alpha)$.
\end{proof}

Let us overview two basic properties of sorites relations.
Throughout the rest of the paper, all accessibility equivalences $\dotlr$ are supposed to be sorites relations unless they are specified.

\begin{prop}\label{basic1}
If $a_1\dotlr a_i$ then $a_1\dotlr a_{i+1}$.
\end{prop}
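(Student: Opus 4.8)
The plan is to read the conclusion straight off the two defining ingredients of a sorites relation, with no computation at all. By the definition fixed in Section~\ref{sorites}, $\dotlr$ is an equivalence relation, hence in particular transitive; and the witnessing sequence $(a_1,\ldots,a_\alpha)$ is a $\dotlr$-chain, so $a_j\dotlr a_{j+1}$ holds for every index $j$ with $1\le j<\alpha$. The proposition is then just an instance of transitivity applied at $j=i$.

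Concretely, the steps I would carry out are: first, check that $a_{i+1}$ is actually a term of the chain, i.e.\ that $i<\alpha$. This is forced by the hypothesis: if $i=\alpha$ then $a_1\dotlr a_i$ reads $a_1\dotlr a_\alpha$, contradicting the condition $\neg(a_1\dotlr a_\alpha)$ built into the definition of a sorites relation (and $i>\alpha$ is impossible, since $a_i$ is a term of the chain). Hence $1\le i<\alpha$, so $a_{i+1}$ is defined and the chain condition gives $a_i\dotlr a_{i+1}$. Second, apply transitivity of $\dotlr$ to $a_1\dotlr a_i$ and $a_i\dotlr a_{i+1}$ to obtain $a_1\dotlr a_{i+1}$. (For the particular $\dotlr=\bigcup_{n\in\FN}R_n$ built in the Theorem one could instead argue numerically: $a_1\dotlr a_i$ gives $||a_1,a_i||<2^n$ for some $n\in\FN$, whence $||a_1,a_{i+1}||<2^n+1\le 2^{n+1}$ with $n+1\in\FN$; but the transitivity argument is the one that works for an arbitrary sorites relation.)

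I do not expect any genuine obstacle — the statement is deliberately elementary. The only point worth flagging is interpretive rather than technical: read contrapositively, the proposition says that $\{\,i\,;\,a_1\dotlr a_i\,\}$ is an \emph{initial segment} of the chain, and by $\neg(a_1\dotlr a_\alpha)$ it is a \emph{proper} one, so there is a ``frontier'' past which the chain leaves $\gal(a_1)$. That asymmetry — one may always step one index further while still inside the galaxy, yet the galaxy is nevertheless eventually left — is exactly the feature the subsequent results will exploit, so it may be worth phrasing the proof so that this initial-segment structure is made explicit.
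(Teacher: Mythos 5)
Your proof is correct and is essentially identical to the paper's: both invoke the chain condition $a_i\dotlr a_{i+1}$ and then transitivity of the equivalence $\dotlr$. The extra check that $i<\alpha$ and the initial-segment remark are harmless additions but not needed beyond what the paper already does.
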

\begin{proof}
Since $(a_1,\ldots,a_\alpha)$ is $\dotlr$-chain, $a_i\dotlr a_{i+1}$ for all $i\in\alpha$.
$a_1\dotlr a_{i+1}$ follows by transitivity.
\end{proof}

\begin{prop}\label{basic2}
If $\neg\left(a_1\dotlr a_i\right)$ then $\neg\left(a_1\dotlr a_{i-1}\right)$.
\end{prop}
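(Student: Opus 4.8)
The plan is to obtain this proposition immediately as the contrapositive of Proposition~\ref{basic1}. Proposition~\ref{basic1} says that if $a_1\dotlr a_j$ for an index $j$ with $1\le j<\alpha$, then $a_1\dotlr a_{j+1}$. Reading this implication in the contrapositive direction and shifting the index by one is exactly the content of the present statement.

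Concretely, I would argue as follows. Fix $i$ with $2\le i\le\alpha$, so that $i-1$ is still a legitimate index of the chain $(a_1,\ldots,a_\alpha)$, and suppose $\neg(a_1\dotlr a_i)$. Assume toward a contradiction that $a_1\dotlr a_{i-1}$. Since $(a_1,\ldots,a_\alpha)$ is a $\dotlr$-chain we have $a_{i-1}\dotlr a_i$, and then transitivity of $\dotlr$ (equivalently, a single application of Proposition~\ref{basic1} with $j=i-1$) yields $a_1\dotlr a_i$, contradicting the hypothesis. Hence $\neg(a_1\dotlr a_{i-1})$.

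There is no real obstacle here; the only thing to keep straight is the index bookkeeping, namely that $i-1$ remains in $\{1,\dots,\alpha\}$ so that both the chain relation $a_{i-1}\dotlr a_i$ and Proposition~\ref{basic1} are available. Everything else is just the transitivity of the equivalence $\dotlr$, already invoked in the proof of Proposition~\ref{basic1}.
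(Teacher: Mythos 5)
Your argument is correct and is essentially identical to the paper's: both assume $a_1\dotlr a_{i-1}$ for contradiction, use the chain relation $a_{i-1}\dotlr a_i$ and transitivity to derive $a_1\dotlr a_i$, and contradict the hypothesis. The extra remarks about index bookkeeping are fine but add nothing beyond what the paper already does implicitly.
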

\begin{proof}
Suppose $a_1\dotlr a_{i-1}$, then $a_1\dotlr a_i$ follows since $a_{i-1}\dotlr a_i$.
It contradicts the assumption.
\end{proof}

As a direct result of these two propositions, a peculiar property of galaxies is drawn: for any given sorites sequence $(a_1,...,a_\alpha)$ and within any finite number of steps, $a_n$ cannot go outside of $\gal(a_1)$, or, conversely, penetrate into $\gal(a_\alpha)$.

The more peculiar property, which we mentioned in previous section, results when we ask exactly how many steps it takes to go across the border of a galaxy.
The answer cannot be specified.
Provided conversely that it takes, say $\beta<\alpha$ steps, then $a_\beta\in\gal(a_1)$ but $a_{\beta+1}\notin \gal(a_1)$.
It contradicts the Proposition \ref{basic1}.

It is strange. But it is the way AST comprehends numbers.
The border can be crossed, but we happen to know it only after some steps are taken outside the border.
Tsujishita named this property of semisets, after well-known features in non-standard analysis \cite{rob66}, as \textit{overspill principle}.

\section{A Problem of Building Common Knowledge}\label{email}

Sorites relations are useful when we try to capture certain types of structures where recurrences of the same procedures occur.
The example we focus here on is the structure of common knowledge.
The specific aspect we want to inquire about common knowledge can be summarized as the question: how many messages does it take for some informations to become common knowledge?
The answer we will provide is \textit{huge}.

To formalize structure of knowledge, we follow Aumann \cite{aumann76}'s framework for set theoretical environment.
Let $\Omega$ be a set, and $\mathscr{P}_1$ and $\mathscr{P}_2$ be its partition.
For every $\omega\in\Omega$ let
\[
\mathbf{P}_i\equiv \left\{ \langle \omega, P_i\rangle ;\ \omega \in P_i \mbox{ and } P_i\in\mathscr{P}_i\right\}.
\]
Then, for any given $\omega \in \Omega$, an event $E$ is called \textit{common knowledge at $\omega$} if $E$ includes the member of the meet $\mathscr{P}_1\wedge\mathscr{P}_2$ which contains $\omega$, that is, $\mathbf{P}_1\wedge\mathbf{P}_2(\omega)\subseteq E$ where $\mathbf{P}_1\wedge\mathbf{P}_2(\omega)\in \mathscr{P}_1 \wedge \mathscr{P}_2$. 

The condition of $E$'s being common knowledge can be put in other equivalent way.
Let $\mathcal{P}^1,\mathcal{P}^2,\cdots,\mathcal{P}^k,\mbox{ where } k\in \FN$, be a sequence of subsets of $\Omega$ satisfying $\mathcal{P}^1=\mathbf{P}_1(\omega)$ and $\mathcal{P}^n=P_i\mbox{ where }P_i\in\mathscr{P}_i\mbox{ and }\mathcal{P}^{n-1}\cap P_i\ne \emptyset$, where $i=2$ if $n$ is even and $i=1$ otherwise (the sequence is not unique since there may be multiple $P_i$s which satisfy $\mathcal{P}^{n-1}\cap P_i\ne \emptyset$).
$\omega'\in \Omega$ is \textit{reachable} from $\omega\in \Omega$ if $\omega\in\mathcal{P}^1$ and $\omega'\in\mathcal{P}^k$.
Then the statement that $E$ is common knowledge at $\omega\in\Omega$ can be stated equivalently as $E$ includes all the states $\omega'$ which are reachable from $\omega$.
\medskip

A concept of common knowledge is necessary when we try to analyze game theoretic situations.
However, it is also well-known that it differs slightly from our practices of reasoning in everyday life.
One of the problems pointed out by Rubinstein \cite{rubinstein89} is known as the electronic-mail game problem.

Rubinstein \cite{rubinstein89} examines the situation where two agents are faced with a coordination problem.
They want to coordinate their actions in accordance with the state of the nature, where two possible states, say $a$ and $b$, of nature are presupposed.
One agent, say the agent 1, knows the information on the true state while the other, say 2, doesn't.
They are located at mutually distanced place and can communicate only by electronic mails.
And e-mails are sent only when the state is $b$.
E-mails fail with a small probability.
So simply sending an e-mail doesn't guarantee that it is received by the recipient.
To confirm it, they must receive a reply.
But here, the same problem arises again.
They must receive a reply to the reply to guarantee that the first reply is received by the sender of the original message.
The process never ends and replies must go on forever before the original information becomes common knowledge.
Then Rubinstein shows us a claim: however hugely many times they send replies, the information will never become common knowledge.
\medskip

The situation can be put in Aumann's framework as follows:
let $\Omega$ be a set of triples consisting of the state of the coordination game, $a$ or $b$, and the numbers, $t$ and $t'$, of messages agent 1 and 2 sent respectively\footnote{
For example, $(a,0,0)$ represents the state in which the state of the game is $a$ and both agents send no message, since the message will be sent only when the state is $b$.

On the other hand $(b,2,1)$ represents the state in which the state of the game is $b$, the agent 1 sends 2 messages and the agent 2 sends just 1 message.
The agent 1 sends 1st message since the state of the game is $b$.
The agent 2 received the original message since $t'=1$ meaning that the agent 2 replied back.
$t=2$ also means that the agent 1 received the reply from the agent 2 since the agent 1 sends the second message.
But the agent 2 fails to receive the second replay, since the agent 2 doesn't send the third reply.
Provided that the agent 2 received the second reply, 2 must have sent back the third reply but it doesn't.
}.
Then $\Omega$ consists of $(a,0,0)$ and $(b,t,t')$ in which $t'=t$ or $t-1$.
The information partition of two agents are given as below:
\begin{align*}
\mathscr{P}_1 &\ =\ \{\{(a,0,0)\},\{(b,1,0),(b,1,1)\},\{(b,2,1),(b,2,2)\},\ldots\},\\
\mathscr{P}_2 &\ =\ \{\{(a,0,0),(b,1,0)\},\{(b,1,1),(b,2,1)\},\{(b,2,2),(b,3,2)\},\ldots\}.
\end{align*}

The problem pointed out by the e-mail game is stated in this setting as: the event $B$, which consists only of all the states whose first element is $b$, will never be common knowledge no matter how many times the message will be sent\footnote{
The validity of this claim is easily checked.
Evidently, all the elements of the information partitions of two agents are reachable from each other.
Therefore, the only event that can be common knowledge is $\Omega$.
Since $\Omega$ includes $(a,0,0)$, the event $B$ cannot be common knowledge even if hugely many messages are sent.}.
\medskip

\section{A Metric on State Spaces}

The claim we just saw may not fit our intuition.
Since, in most cases, we usually don't doubt that the event $B$ is common knowledge after sending hugely many replies each other.

This problem will be solved when we construct numbers based on AST.
It can be done by contracting the scope of reachability to \textit{finitely many} steps.
To put it in more precise manner, let us first look into the detailed structure of reachability.

To identify all reachable states, we define a function $L_i(A)$\footnote{This function can be defined by way of knowledge operator $K_i:2^\Omega\rightarrow2^\Omega$, which we examine later in the section \ref{knowledge}, as $L_i(A)=\Omega\setminus K_i(\Omega\setminus A)$.}, called link of agent $i$ from $A$, which indicates the set of all reachable states from $A\subseteq\Omega$, given agent $i$'s partition $\mathscr{P}_i$ as:
\[
L_i\left(A\right)\ \equiv\ \{ y\in\Omega\,;\,(\exists x\in{A})(y\in\mathbf{P}_i(x))\}.
\]

We also define the operator $L_{G}({A})$ that indicates the link among agents of, not only two but generally, $G$ from ${A}$ as:
\[
L_{G}\left({A}\right)\ \equiv\ \bigcup_{i\in G}L_i\left(A\right).
\]
and define $L_{G}^n(A)$ inductively as $L_{G}^1(A)=L_{G}(A)$ and $L_{G}^{n}(A)=L_{G}\left(L_{G}^{n-1}(A)\right)$.
For notational ease, we define $L_{G}^{0}\left(A\right)$ as $L_{G}^{0}\left(A\right)=A$.

Let us make sure that for any two reachable states $x$ and $y$, there exists an $n$-step link between them.
\begin{prop}
For any two mutually reachable states $x,y\in\Omega$, there exists at least one natural number $n\in N$ which satisfies $y\in L_G^n\left(\{x\}\right)$.
\end{prop}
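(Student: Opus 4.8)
The plan is to take the definition of reachability at face value, extract from it an explicit finite chain of partition cells joining $x$ to $y$, and then walk along that chain one hop at a time while recording which iterate $L_G^j$ the current point lies in. Since $y$ is reachable from $x$, the definition supplies a sequence $\mathcal{P}^1,\dots,\mathcal{P}^k$ of subsets of $\Omega$ with $k\in\FN$, with $x\in\mathcal{P}^1$, with $y\in\mathcal{P}^k$, with each $\mathcal{P}^j$ a cell of $\mathscr{P}_{i_j}$ for some $i_j\in G$, and with $\mathcal{P}^{j}\cap\mathcal{P}^{j+1}\ne\emptyset$ for $j<k$. I would first isolate the two trivial facts about the link operator that do all the work: $L_i(\{z\})=\mathbf{P}_i(z)$, so that $w\in L_G(\{z\})$ whenever $z$ and $w$ share a cell of some agent in $G$; and $L_G$, hence every iterate $L_G^n$, is $\subseteq$-monotone, since $L_i(A)=\bigcup_{z\in A}\mathbf{P}_i(z)$.

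Next I would choose overlap witnesses $z_0,\dots,z_k$: put $z_0=x$ and $z_k=y$, and for $0<j<k$ pick any $z_j\in\mathcal{P}^j\cap\mathcal{P}^{j+1}$. Then for each $j=1,\dots,k$ the points $z_{j-1}$ and $z_j$ lie in the common cell $\mathcal{P}^j$ of agent $i_j$, so $z_j\in L_{i_j}(\{z_{j-1}\})\subseteq L_G(\{z_{j-1}\})$. Iterating this one-step inclusion $k$ times --- formally, an induction on $j\le k$, which is legitimate because $k\in\FN$ makes $\{0,\dots,k\}$ a finite segment of $N$ --- yields $z_j\in L_G^j(\{x\})$ for all $j\le k$: the base case is $z_0=x\in\{x\}=L_G^0(\{x\})$, and the inductive step is $z_j\in L_G(\{z_{j-1}\})\subseteq L_G\bigl(L_G^{j-1}(\{x\})\bigr)=L_G^j(\{x\})$ by monotonicity. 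Setting $j=k$ gives $y\in L_G^k(\{x\})$, so $n=k$ witnesses the statement, and in fact $n\in\FN$.

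I do not expect a genuine obstacle; the argument is bookkeeping. The one step that merits a sentence of care is the passage from the chain-of-cells form of reachability to the ``union over the agents in $G$'' built into $L_G$, and what makes it harmless is precisely the padding by the witnesses $z_j$, which turns the cell chain into an honest $L_G$-trajectory $z_0,z_1,\dots,z_k$. I would also point out where the hypothesis is really used: it is the requirement $k\in\FN$ in the definition of reachability that forces the resulting $n$ to be an honest natural number, and this is exactly the feature that will later fail for two states that are joined only by a chain of huge length.
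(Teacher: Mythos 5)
Your proof is correct and follows essentially the same route as the paper's: both extract the finite chain of partition cells supplied by the definition of reachability and show by induction that the $\ell$-th cell lies inside $L_G^\ell(\{x\})$. Your version is in fact slightly more careful than the paper's, which simply asserts $\mathcal{P}^\ell\subseteq L_G^\ell(\{x\})$ where you make the induction explicit via the overlap witnesses $z_j$ and the monotonicity of $L_G$ (the paper's Lemma \ref{isomorphic}).
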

\begin{proof}
Let $x$ and $y$ are mutually reachable states.
Since they are reachable, there exist a natural number $n\in N$ and sequence of subsets $\mathcal{P}^1,\ldots,\mathcal{P}^n\subseteq\Omega$, which satisfies $\mathcal{P}^1=\mathbf{P}_i(x)$ for some $i\in G$, $\mathcal{P}^{\ell-1}\cap\mathcal{P}^{\ell}\neq\emptyset$ and $\mathcal{P}^\ell\in \mathscr{P}_j$ for some $j\in G$ for all $\ell\in\{2,\ldots,n\}$, and $\mathcal{P}^n=\mathbf{P}_k(y)$ for some $k\in G$.
Thus, $\mathcal{P}^\ell\in L^\ell_G(\{x\})$ for all $\ell$ and $\mathcal{P}^n\subseteq L_G^n\left(\{x\}\right)$.
It implies $y\in L_G^n\left(\{x\}\right)$.
\end{proof}

$L^n_{G}(A)$ has many other tractable properties.
Let us display some of them.

\begin{lemma}
For any $A\subseteq\Omega$, $A\subseteq L_{G}(A)$.
\end{lemma}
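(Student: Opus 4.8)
The plan is to unwind the definitions and use the single fact that $\mathscr{P}_i$ is a \emph{partition} of $\Omega$, which forces $x\in\mathbf{P}_i(x)$ for every $x\in\Omega$ and every agent $i$. Concretely, fix an arbitrary $x\in A$. Pick any $i\in G$ (the index set $G$ of agents is nonempty by assumption). Since $\mathscr{P}_i$ partitions $\Omega$, there is a unique cell $\mathbf{P}_i(x)\in\mathscr{P}_i$ with $x\in\mathbf{P}_i(x)$; in other words, $x$ is a reachable witness for itself.

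Next I would apply the definition of $L_i$: taking $y:=x$ and using the witness $x\in A$ together with $x\in\mathbf{P}_i(x)$, we get $x\in L_i(A)=\{y\in\Omega\,;\,(\exists x'\in A)(y\in\mathbf{P}_i(x'))\}$. Then, since $L_G(A)=\bigcup_{i\in G}L_i(A)\supseteq L_i(A)$, we conclude $x\in L_G(A)$. As $x\in A$ was arbitrary, this yields $A\subseteq L_G(A)$.

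There is essentially no obstacle here; the only point worth stating explicitly is the reflexivity property of partition membership (each state lies in its own information cell), which is exactly what makes $L_G$ extensive. One could even note that the same argument shows $A\subseteq L_i(A)$ for each individual $i\in G$, from which the displayed inclusion is immediate by taking the union over $G$.
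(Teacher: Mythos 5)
Your argument is correct and is essentially identical to the paper's proof, which also rests on the single observation that $x\in\mathbf{P}_i(x)$ for every $x\in A$ and $i\in G$, giving $A\subseteq L_i(A)\subseteq\bigcup_{i\in G}L_i(A)=L_G(A)$. You merely spell out the partition-membership reflexivity and the witness in the definition of $L_i$ more explicitly.
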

\begin{proof}
Since $x\in \mathbf{P}_i(x)$ for any $x\in A$ and $i\in G$, $A\subseteq L_i(A)\subseteq \bigcup_{i\in G}L_i(A)$. 
\end{proof}

\begin{cor}\label{monotone}
For any $n\in N$ and $A\subseteq \Omega$, $L_{G}^{n}(A)\subseteq L_{G}^{n+1}(A)$.
\end{cor}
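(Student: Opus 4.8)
The plan is to derive this directly from the preceding Lemma rather than by a fresh induction. The Lemma states $B\subseteq L_{G}(B)$ for \emph{every} subset $B\subseteq\Omega$; I would simply instantiate it at $B=L_{G}^{n}(A)$. This gives $L_{G}^{n}(A)\subseteq L_{G}\bigl(L_{G}^{n}(A)\bigr)$, and since $L_{G}^{n+1}(A)$ is defined to be exactly $L_{G}\bigl(L_{G}^{n}(A)\bigr)$, the inclusion $L_{G}^{n}(A)\subseteq L_{G}^{n+1}(A)$ follows at once.

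The only thing that needs checking before this works is that $L_{G}^{n}(A)$ is itself a subset of $\Omega$, so that the Lemma is applicable to it. I would dispatch this by a one-line induction on $n$ via the Axiom of induction: $L_{G}^{0}(A)=A\subseteq\Omega$ by hypothesis, and if $L_{G}^{n-1}(A)\subseteq\Omega$ then $L_{G}^{n}(A)=L_{G}\bigl(L_{G}^{n-1}(A)\bigr)\subseteq\Omega$, because each $L_{i}(\,\cdot\,)$, and hence $L_{G}(\,\cdot\,)=\bigcup_{i\in G}L_{i}(\,\cdot\,)$, is by its very definition a subclass of $\Omega$. This argument is uniform in $n$, so it covers all $n\in N$, including huge $n$, since $L_{G}^{n}(A)$ is given by set-recursion.

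There is essentially no obstacle here — stating the result as a corollary is precisely an acknowledgement that the substance of the argument was already carried out in the Lemma. One might instead be tempted to prove the claim by induction on $n$, applying $L_{G}$ to both sides of $L_{G}^{n-1}(A)\subseteq L_{G}^{n}(A)$; but that route would require monotonicity of the operator $L_{G}$ with respect to $\subseteq$, which has not been established at this point in the development, so the direct instantiation of the Lemma at $L_{G}^{n}(A)$ is the economical path and the one I would take.
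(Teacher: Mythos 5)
Your proposal is correct and matches the intended argument: the paper gives no explicit proof for this corollary precisely because it follows by instantiating the preceding Lemma at $B=L_{G}^{n}(A)$ and unfolding the definition $L_{G}^{n+1}(A)=L_{G}\bigl(L_{G}^{n}(A)\bigr)$, exactly as you do. Your side remark that $L_{G}^{n}(A)\subseteq\Omega$ (immediate since each $L_i(\cdot)$ is by definition a subclass of $\Omega$) and your observation that the monotonicity route would rely on Lemma~\ref{isomorphic}, which appears only later, are both accurate.
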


\begin{lemma}\label{isomorphic}
For any $A,B\subseteq\Omega$, if $A\subseteq B$ then $L_G(A)\subseteq L_{G}(B)$.
\end{lemma}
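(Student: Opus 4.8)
The plan is to prove the inclusion by unwinding the definitions and reducing to a pointwise statement about the single-agent links $L_i$. Recall that $L_G(A)=\bigcup_{i\in G}L_i(A)$, so it suffices to show that each $L_i$ is monotone, i.e. that $A\subseteq B$ implies $L_i(A)\subseteq L_i(B)$ for every $i\in G$; taking unions over $i\in G$ on both sides then yields the claim.

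To show the pointwise monotonicity, I would take an arbitrary $y\in L_i(A)$ and use the defining formula $L_i(A)=\{y\in\Omega\,;\,(\exists x\in A)(y\in\mathbf{P}_i(x))\}$ to extract a witness $x\in A$ with $y\in\mathbf{P}_i(x)$. Since $A\subseteq B$, the same $x$ lies in $B$, so the pair $(x,y)$ witnesses $y\in L_i(B)$ via the very same formula. Hence $L_i(A)\subseteq L_i(B)$. The only point requiring a moment's care is that $\mathbf{P}_i(x)$ is meaningfully defined for the chosen $x$; this is immediate because $\mathscr{P}_i$ is a partition of $\Omega$, so each $x\in\Omega$ lies in exactly one $P_i\in\mathscr{P}_i$.

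I do not expect any genuine obstacle here: the statement is a routine monotonicity fact and its proof is essentially the observation that the existential quantifier over $A$ in the definition of $L_i$ respects the inclusion $A\subseteq B$. If brevity is desired, the whole argument collapses to the single line: if $y\in L_G(A)$ then $y\in\mathbf{P}_i(x)$ for some $i\in G$ and some $x\in A\subseteq B$, so $y\in L_i(B)\subseteq L_G(B)$.
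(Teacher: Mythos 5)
Your proof is correct and is essentially the same as the paper's: both extract a witness $x\in A$ with $y\in\mathbf{P}_i(x)$ for some $i\in G$ and observe that $x\in B$ makes the same witness work for $L_G(B)$. Your one-line version at the end is precisely the argument the paper gives.
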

\begin{proof}
For any $y\in L_G(A)$ there exists $x\in A$ and $i\in G$ which satisfies $y\in \mathbf{P}_i(x)$.
Since $x\in A\subseteq B$, $\mathbf{P}_i(x)\subseteq L_G(B)$.
It implies $y\in L_G(B)$. 
\end{proof}

\begin{lemma}\label{exchange}
For any given $x\in\Omega$ and $n\in N$, $y\in L^n_{G}\left(\{x\}\right)$ implies $x\in L^n_{G}\left(\{y\}\right)$.
\end{lemma}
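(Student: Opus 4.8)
The plan is to prove this by induction on $n\in N$, the single indispensable ingredient being that the one‑step link operator $L_G$ is symmetric in the relevant sense: for all $u,v\in\Omega$ we have $v\in L_G(\{u\})$ if and only if $u\in L_G(\{v\})$. This is immediate from the definitions once one unwinds them, since $L_i(\{u\})=\mathbf{P}_i(u)$ is exactly the cell of the partition $\mathscr{P}_i$ containing $u$, so ``$v\in\mathbf{P}_i(u)$'' asserts the symmetric relation that $u$ and $v$ lie in a common cell of $\mathscr{P}_i$; taking the union over $i\in G$ preserves this symmetry. Intuitively, $y\in L_G^n(\{x\})$ means there is a chain $x=z_0,z_1,\dots,z_n=y$ with each consecutive pair in a common agent's cell, and reversing the chain witnesses $x\in L_G^n(\{y\})$; the induction below is just a self‑contained way of performing that reversal.

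For the base case one takes $n=0$, where $L_G^0(\{x\})=\{x\}$, so $y\in L_G^0(\{x\})$ forces $y=x$ and hence $x\in\{y\}=L_G^0(\{y\})$; the symmetry of $L_G$ handles $n=1$ directly. For the inductive step, assume the statement for $n$ and suppose $y\in L_G^{n+1}(\{x\})=L_G\bigl(L_G^{n}(\{x\})\bigr)$. Then there is some $z\in L_G^{n}(\{x\})$ with $y\in L_G(\{z\})$. By the induction hypothesis $x\in L_G^{n}(\{z\})$, and by symmetry of $L_G$ we get $z\in L_G(\{y\})$, i.e. $\{z\}\subseteq L_G(\{y\})$. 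Monotonicity of $L_G^{n}$ (a routine sub‑induction iterating Lemma \ref{isomorphic}) then yields $L_G^{n}(\{z\})\subseteq L_G^{n}\bigl(L_G(\{y\})\bigr)$, and a further short induction on $n$ shows $L_G^{n}\bigl(L_G(A)\bigr)=L_G^{n+1}(A)$ for every $A\subseteq\Omega$. Combining these, $x\in L_G^{n}(\{z\})\subseteq L_G^{n+1}(\{y\})$, which closes the induction.

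There is no deep obstacle here; the work is entirely bookkeeping. The one point worth a sentence of justification, rather than a true difficulty, is that the induction ranges over all of $N$, including huge natural numbers, and that $L_G^n$ is itself defined by recursion on $n\in N$. Both are legitimate because $N$ carries a model of Peano arithmetic and $L_G$ is a set function on the set $2^{\Omega}$, so the recursion theorem and the induction axiom both apply. The only genuine content is the symmetry of $L_G$ and the manipulation that transfers the extra application of $L_G$ from the $x$‑side to the $y$‑side via monotonicity and the identity $L_G^{n}\circ L_G=L_G^{n+1}$.
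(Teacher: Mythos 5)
Your proof is correct and follows essentially the same route as the paper's: both induct on $n$, extract an intermediate state $z$ with $y\in L_G(\{z\})$, flip the one-step link using the symmetry of the partition cells $\mathbf{P}_i$, apply the induction hypothesis to $x$ and $z$, and re-embed via iterated monotonicity (Lemma~\ref{isomorphic}). The only difference is cosmetic indexing ($n\to n+1$ versus $n-1\to n$) and that you make explicit the identity $L_G^{n}\circ L_G=L_G^{n+1}$, which the paper uses implicitly.
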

\begin{proof}
If $n=0$, it is trivially satisfied since $y=x$.

Suppose $a\in L^{n-1}_G\left(\{b\}\right)$ implies $b\in L^{n-1}_G\left(\{a\}\right)$,  and $y\in L^n_G\left(\{x\}\right)$.
Then, there exists at least one $z\in L^{n-1}_{G}\left(\{x\}\right)$ which satisfies $y\in L_G\left(\{z\}\right)$.
It implies that there exists $i\in G$ in which $y\in\mathbf{P}_i(z)$ holds, thus $z\in\mathbf{P}_i(y)$ and therefore $z\in L_G\left(\{y\}\right)$.
Applying the lemma \ref{isomorphic} $n-1$ times to $\{z\}\subseteq L_G\left(\{y\}\right)$ yields that $L_G^{n-1}\left(\{z\}\right)\subseteq L_G^n\left(\{y\}\right)$.
Then, $z\in L_G^{n-1}\left(\{x\}\right)$ implies $x\in L_G^{n-1}\left(\{z\}\right)\subseteq L_G^n\left(\{y\}\right)$.
\end{proof}

We next define a metric function between two mutually reachable states which indicates steps it takes from one state to the other.
Let $||x,\,y||=0$ iff $y\in L^0_{G}\left(\{x\}\right)$.
If $y\notin L^0_{G}\left(\{x\}\right)$, then for each $n\geq 1$
\[
||x,\,y||\,=\,
n  \quad\text{iff }\quad y\in L_{G}^{n}\left(\{x\}\right)\setminus L_{G}^{n-1}\left(\{x\}\right)
\]

It is evident that this is well-defined.
Provided contrary that $||x,\,y||=\alpha$ and $||x,\,y||=\beta$ in which $\alpha<\beta$, then both $y\in L_G^{\alpha}\left(\{x\}\right)$ and $y\notin L_G^{\beta-1}\left(\{x\}\right)$ must be satisfied at the same time.
But by corollary \ref{monotone} $L_G^{\alpha}\left(\{x\}\right)\subseteq  L_G^{\beta-1}\left(\{x\}\right)$, thus $y$ must satisfy $y\in L_G^{\beta-1}\left(\{x\}\right)$.
It is contradiction.

Let us next confirm this function actually is a metric.
\begin{prop}
A function $||\cdot,\,\cdot||$ is a metric.
\end{prop}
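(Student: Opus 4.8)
The plan is to verify the three defining properties of a metric: positivity together with the identity of indiscernibles, symmetry, and the triangle inequality. That the value $||x,y||$ exists and is a well-defined natural number has already been recorded above (existence because mutually reachable states always satisfy $y\in L_G^n(\{x\})$ for some $n\in N$ by the proposition preceding the lemmas, well-definedness via Corollary \ref{monotone}); in particular $||x,y||\geq 0$, and by construction $||x,y||=0$ exactly when $y\in L_G^0(\{x\})=\{x\}$, i.e.\ when $x=y$. So the first property is immediate and needs no argument beyond unwinding the definition.

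For symmetry, I would suppose $||x,y||=n$, so that $y\in L_G^n(\{x\})\setminus L_G^{n-1}(\{x\})$. Applying Lemma \ref{exchange} to $y\in L_G^n(\{x\})$ gives $x\in L_G^n(\{y\})$, hence $||y,x||\leq n$. For the reverse inequality, if $||y,x||=m$ with $m<n$ then $m\leq n-1$, and Lemma \ref{exchange} applied to $x\in L_G^m(\{y\})$ yields $y\in L_G^m(\{x\})$; but $L_G^m(\{x\})\subseteq L_G^{n-1}(\{x\})$ by Corollary \ref{monotone}, contradicting $y\notin L_G^{n-1}(\{x\})$. Hence $||y,x||=n=||x,y||$, the case $n=0$ being trivial since then $x=y$.

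For the triangle inequality, write $m=||x,y||$ and $n=||y,z||$, so $y\in L_G^m(\{x\})$ and $z\in L_G^n(\{y\})$. The auxiliary fact I would first establish is the composition identity $L_G^n\bigl(L_G^m(A)\bigr)=L_G^{m+n}(A)$, which is an easy induction on $n$ from the recursive definition $L_G^{k}(A)=L_G\bigl(L_G^{k-1}(A)\bigr)$ with $L_G^0(A)=A$. Since $\{y\}\subseteq L_G^m(\{x\})$, iterating Lemma \ref{isomorphic} exactly $n$ times gives $L_G^n(\{y\})\subseteq L_G^n\bigl(L_G^m(\{x\})\bigr)=L_G^{m+n}(\{x\})$, so $z\in L_G^{m+n}(\{x\})$ and therefore $||x,z||\leq m+n=||x,y||+||y,z||$. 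The cases $m=0$ or $n=0$ are subsumed in this computation, and the symbol $||x,z||$ is legitimate here because reachability is transitive, so $x$ and $z$ are mutually reachable whenever $x,y$ and $y,z$ are.

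None of the steps is deep; the only points needing a moment's care are the strict-inequality sub-case of the symmetry argument, where one must combine Lemma \ref{exchange} with the monotonicity Corollary \ref{monotone}, and the bookkeeping behind the composition identity $L_G^n\bigl(L_G^m(A)\bigr)=L_G^{m+n}(A)$ that drives the triangle inequality — this is the closest thing to a real obstacle, and it is routine.
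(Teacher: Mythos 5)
Your proof is correct and follows essentially the same route as the paper: identity of indiscernibles from $L_G^0(\{x\})=\{x\}$, symmetry from Lemma \ref{exchange} combined with Corollary \ref{monotone}, and the triangle inequality from iterating Lemma \ref{isomorphic}. The only differences are cosmetic — you argue the triangle inequality directly where the paper argues by contradiction, and you make explicit the composition identity $L_G^n\bigl(L_G^m(A)\bigr)=L_G^{m+n}(A)$ that the paper uses tacitly, which is if anything a small improvement in rigor.
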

\begin{proof}
Since $L_G^0\left(\{x\}\right)=\{x\}$, $||x,\,y||=0$ follows only when $x=y$ is satisfied.

Symmetry follows from lemma \ref{exchange}.
The case where $||x,y||=0$ is trivial.
Suppose $||x,\,y||=n>0$.
Then $y\in L^{n}_G\left(\{x\}\right)$ and $y\notin L^{n-1}_G\left(\{x\}\right)$ since $y\in L^{n}_{G}\left(\{x\}\right)\setminus L^{n-1}_{G}\left(\{x\}\right)$.
Thus $x\in L_G^{n}(\left(\{ y\}\right)$ and $x\notin L^{n-1}_G\left(\{y\}\right)$ by lemma \ref{exchange}.
Thus, $x\in L^{n}_{G}\left(\{y\}\right)\setminus L_G^{n-1}\left(\{y\}\right)$ and $||y,\,x||=n$.

Suppose $||x,\,y||=\ell$ and $||y,\, z||=m$.
Let $||x,\,z||=k>\ell+m$, then, $z\in L_{G}^k\left(\{x\}\right)\setminus L_{G}^{k-1}\left(\{x\}\right)$.
By corollary \ref{monotone}, $z\notin L_{G}^{v}\left(\{x\}\right)$ for all $v\leq k-1$.
But since $||x,\,y||=\ell$, $y\in L_{G}^\ell\left(\{x\}\right)$.
By the assumption $||y,\,z||=m$, it follows $z\in L_{G}^m\left(\{y\}\right)$.
Therefore, it must be satisfied that $z\in L_{G}^{\ell+m}\left(\{x\}\right)$.
It is contradiction since $\ell+m\leq k-1$.
\end{proof}

\section{Subjective Reachability}

Now we can construct a sorites relation $\dotlr$ among group $G$'s perspective by making use of this metric $||\cdot,\,\cdot||$ and define altered reachability as: $x$ is \textit{subjectively reachable} from $\omega$ among the group $G$ iff $\omega\dotlr x$.
Then, an event $E$'s being common knowledge at $\omega$ among the group $G$ can be restated accordingly as: $E$ is common knowledge iff $\gal(\omega)\subseteq E$, where $\gal(\omega)$ represents a galaxy defined by $\dotlr$.

With this slight alteration, the event $B$, which cannot be common knowledge in Rubinstein \cite{rubinstein89}, turns now to be common knowledge.

\begin{prop}\label{commonknowledge}
Given the state of the game is $b$ and hugely many, say $\alpha$, messages are sent, then the event $B$ is common knowledge in AST environment.
\end{prop}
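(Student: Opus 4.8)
The plan is to reduce the proposition to a single fact about the realized state. Write $\omega$ for the state after the $\alpha$ messages have been exchanged; since the state of the game is $b$, we have $\omega=(b,t,t')$ with $t$ of the order of $\alpha$, in particular $t\in N\setminus\FN$. The only state of $\Omega$ lying outside $B$ is $(a,0,0)$, so $B=\Omega\setminus\{(a,0,0)\}$, and therefore the criterion ``$B$ is common knowledge at $\omega$'', namely $\gal(\omega)\subseteq B$, is equivalent to the single assertion $\neg\bigl(\omega\dotlr(a,0,0)\bigr)$. (That $(a,0,0)$ and $\omega$ are mutually reachable in the unrestricted sense, so that the metric $\|(a,0,0),\omega\|$ is defined, is the observation already recorded in Section~\ref{email}.) So the whole proof comes down to showing that $\|(a,0,0),\omega\|$ is a huge natural number.

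First I would recall how the galaxy of $\dotlr$ looks. By the construction in the proof of the theorem of Section~\ref{sorites}, applied here to the metric $\|\cdot,\cdot\|$, one has $\dotlr=\bigcup_{n\in\FN}R_n$ with $R_n=\{\langle x,y\rangle\,;\,\|x,y\|<2^{n}\}$; hence $x\dotlr y$ holds precisely when $\|x,y\|$ is finite, and $\gal(\omega)$ is exactly the class of states lying at finite $L_G$-distance from $\omega$. So it suffices to prove $\|(a,0,0),\omega\|\notin\FN$.

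Next I would bound how far finitely many applications of $L_G$ can travel. The partitions $\mathscr{P}_1,\mathscr{P}_2$ lay $\Omega$ out as a single line $(a,0,0),(b,1,0),(b,1,1),(b,2,1),(b,2,2),\dots$, and every cell of either partition joins only two consecutive nodes of that line. I would therefore prove, by induction on $k$ (a legitimate set-formula induction, since all the objects involved are sets), that every triple $(b,u,u')\in L_{G}^{k}(\{(a,0,0)\})$ satisfies $u\le k$: the base case is immediate because $L_{G}^{0}(\{(a,0,0)\})=\{(a,0,0)\}$, and in the inductive step a state $(b,u,u')$ with $u\le k$ lies only in cells of $\mathscr{P}_1$ and $\mathscr{P}_2$ whose other member has first coordinate at most $u+1\le k+1$, while $(a,0,0)$ itself reaches only $(b,1,0)$. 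Since $t$ is huge we have $t>k$ for every $k\in\FN$, so $\omega=(b,t,t')\notin L_{G}^{k}(\{(a,0,0)\})$ for any finite $k$; by the definition of the metric this forces $\|(a,0,0),\omega\|\ge t$, hence $\|(a,0,0),\omega\|\notin\FN$.

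Combining the two steps gives $\neg\bigl((a,0,0)\dotlr\omega\bigr)$, hence $(a,0,0)\notin\gal(\omega)$ and $\gal(\omega)\subseteq\Omega\setminus\{(a,0,0)\}=B$, which is exactly the assertion that $B$ is common knowledge at $\omega$ in the AST sense. The step needing the most care is the inductive bound on $L_{G}^{k}(\{(a,0,0)\})$: one has to keep track of both message-count coordinates of the triples and of the alternation between $\mathscr{P}_1$- and $\mathscr{P}_2$-cells. Once the linear structure of $\Omega$ is written out explicitly this is routine, and everything else is a direct unwinding of the definitions of $\dotlr$ and of a galaxy.
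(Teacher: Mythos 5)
Your proof is correct and follows essentially the same route as the paper's: reduce common knowledge of $B$ to $\neg\bigl((a,0,0)\dotlr\omega\bigr)$ for the realized state $\omega$, and establish that the $L_G$-distance from $(a,0,0)$ to $\omega$ is huge, so that $(a,0,0)\notin\gal(\omega)$ and $\gal(\omega)\subseteq B$. The only difference is one of rigor: the paper simply asserts $||(a,0,0),(b,\alpha,\alpha)||=\alpha$ and appeals to the sorites property, whereas you make explicit both the inductive bound on $L_{G}^{k}(\{(a,0,0)\})$ and the identification of $x\dotlr y$ with finiteness of $||x,y||$ via the generating sequence $R_n$ — details the paper leaves implicit.
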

\begin{proof}
It is sufficient to consider the case where the true state is $(b,\alpha,\alpha)$, in which the state of the game is $b$ and both players sent $\alpha$ messages.
Since $||(a,0,0),(b,\alpha,\alpha)||=\alpha$ and $\dotlr$ are sorites relations, $\neg\left((a,0,0)\dotlr(b,\alpha,\alpha)\right)$ and $(a,0,0)\notin \gal\left((b,\alpha,\alpha)\right)$.
Therefore, the event $B$ contains $\gal((b,\alpha,\alpha))$ as its semiset and, thus, is common knowledge.
\end{proof}

Besides this, the event $B$ has come to have many interesting features which the original one doesn't have.
Let us review two of them.
\begin{prop}
If event $B$ is common knowledge under the true state $(b,\tau,\tau)$, then it is also common knowledge under $(b,\tau-1,\tau-1)$.
\end{prop}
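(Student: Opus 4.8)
The plan is to reduce the statement about common knowledge of $B$ to a statement about the $\sigma$-equivalence $\dotlr$ and the single distinguished state $(a,0,0)$, and then to exploit that $(b,\tau-1,\tau-1)$ and $(b,\tau,\tau)$ lie in one and the same galaxy. Recall that $B$ consists of every state whose first coordinate is $b$, so that $\Omega\setminus B=\{(a,0,0)\}$. By the characterization of common knowledge adopted in this section, $B$ is common knowledge at a state $\omega$ precisely when $\gal(\omega)\subseteq B$, and since $(a,0,0)$ is the only state outside $B$, this is in turn equivalent to $(a,0,0)\notin\gal(\omega)$, i.e.\ to $\neg(\omega\dotlr(a,0,0))$.

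The first concrete step is to check that $(b,\tau-1,\tau-1)$ and $(b,\tau,\tau)$ are at finite metric distance. Reading off the partitions $\mathscr{P}_1$ and $\mathscr{P}_2$, one has $\mathbf{P}_2((b,\tau-1,\tau-1))=\{(b,\tau-1,\tau-1),(b,\tau,\tau-1)\}$ and $\mathbf{P}_1((b,\tau,\tau-1))=\{(b,\tau,\tau-1),(b,\tau,\tau)\}$, so $(b,\tau,\tau)\in L_{G}^{2}\left(\{(b,\tau-1,\tau-1)\}\right)$ and hence $||(b,\tau-1,\tau-1),(b,\tau,\tau)||\leq 2\in\FN$. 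Since $2<2^{n}$ for every finite $n\geq 2$, the pair $\langle(b,\tau-1,\tau-1),(b,\tau,\tau)\rangle$ belongs to the generating relation $R_{n}$ for such $n$, and therefore $(b,\tau-1,\tau-1)\dotlr(b,\tau,\tau)$.

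Because $\dotlr$ is an equivalence relation, $(b,\tau-1,\tau-1)\dotlr(b,\tau,\tau)$ yields $\gal((b,\tau-1,\tau-1))=\gal((b,\tau,\tau))$. By hypothesis $B$ is common knowledge at $(b,\tau,\tau)$, so $\gal((b,\tau,\tau))\subseteq B$; hence $\gal((b,\tau-1,\tau-1))\subseteq B$, which is exactly the assertion that $B$ is common knowledge at $(b,\tau-1,\tau-1)$. (Alternatively, one can bypass galaxies and argue metrically: from $\neg((b,\tau,\tau)\dotlr(a,0,0))$ the distance $||(a,0,0),(b,\tau,\tau)||$ is huge, and the triangle inequality gives $||(a,0,0),(b,\tau-1,\tau-1)||\geq||(a,0,0),(b,\tau,\tau)||-2$, still a huge number, so $\neg((b,\tau-1,\tau-1)\dotlr(a,0,0))$.)

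The only real obstacle is the bookkeeping in the second step: one must be sure, from the explicit form of $\mathscr{P}_1$ and $\mathscr{P}_2$, that $(b,\tau-1,\tau-1)$ and $(b,\tau,\tau)$ are genuinely separated by only finitely many links — two, in fact — so that they fall into a common $R_{n}$ with $n\in\FN$; everything afterwards is immediate from the fact that galaxies are the $\dotlr$-equivalence classes. It is also worth remarking in passing that the substantive case is $\tau\in N\setminus\FN$: if $\tau$ were finite the hypothesis would already fail (the distance from $(a,0,0)$ would be finite), so the implication holds vacuously, and in the huge case $\tau-1$ is again huge, so $(b,\tau-1,\tau-1)$ is a bona fide element of $\Omega$.
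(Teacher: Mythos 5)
Your proof is correct and follows essentially the same route as the paper's: both reduce common knowledge of $B$ at a state $\omega$ to $(a,0,0)\notin\gal(\omega)$ and then transfer this from $(b,\tau,\tau)$ to $(b,\tau-1,\tau-1)$ via transitivity of $\dotlr$ (the paper packages this as an appeal to Proposition~2). The only difference is that you additionally verify, from the explicit partitions, that $||(b,\tau-1,\tau-1),(b,\tau,\tau)||=2\in\FN$ so that the two states share a galaxy --- a small piece of bookkeeping the paper's citation of Proposition~2 implicitly assumes.
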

\begin{proof}
Since event $B$ is common knowledge under the state $(b,\tau,\tau)$, $\neg((a,0,0)\dotlr (b,\tau,\tau))$ and thus, $\neg((a,0,0)\dotlr (b,\tau-1,\tau-1))$ by Proposition \ref{basic2}.
Consequently, $(a,0,0)\notin\gal((b,\tau-1,\tau-1))$ and $\gal((b,\tau-1,\tau-1))\subseteq B$.
Therefore, event $B$ is also common knowledge under the state $(b,\tau-1,\tau-1)$.
\end{proof}

Furthermore, the next property also holds.
\begin{prop}
If event $B$ is not common knowledge under the true state $(b,t,t)$, then it is also not common knowledge under $(b,t+1,t+1)$.
\end{prop}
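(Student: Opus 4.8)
The plan is to re‑express both occurrences of ``common knowledge'' in terms of the $\sigma$-equivalence $\dotlr$ and then apply Proposition~\ref{basic1} once. Since $B$ consists of exactly those states whose first coordinate is $b$, we have $\Omega\setminus B=\{(a,0,0)\}$, so for every state $\omega$ the inclusion $\gal(\omega)\subseteq B$ is equivalent to $(a,0,0)\notin\gal(\omega)$, i.e.\ to $\neg\bigl(\omega\dotlr(a,0,0)\bigr)$. Hence the hypothesis ``$B$ is not common knowledge under $(b,t,t)$'' says precisely $(a,0,0)\dotlr(b,t,t)$, and the conclusion we must reach is $(a,0,0)\dotlr(b,t+1,t+1)$.

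Next I would invoke the sorites chain already used in Proposition~\ref{commonknowledge}: the diagonal states $(a,0,0),(b,1,1),(b,2,2),\dots$ form a $\dotlr$-chain whose consecutive members are $\dotlr$-related and for which $\neg\bigl((a,0,0)\dotlr(b,\alpha,\alpha)\bigr)$ whenever $\alpha$ is huge. In this chain $(b,t,t)$ and $(b,t+1,t+1)$ are consecutive entries. Starting from $(a,0,0)\dotlr(b,t,t)$ and applying Proposition~\ref{basic1} once along this chain yields $(a,0,0)\dotlr(b,t+1,t+1)$; translating back, $(a,0,0)\in\gal\bigl((b,t+1,t+1)\bigr)$, so $\gal\bigl((b,t+1,t+1)\bigr)\not\subseteq B$, which is exactly the assertion that $B$ is not common knowledge under $(b,t+1,t+1)$. (If one prefers to avoid Proposition~\ref{basic1}: $\{(b,t,t),(b,t+1,t)\}$ is a cell of $\mathscr{P}_2$ and $\{(b,t+1,t),(b,t+1,t+1)\}$ is a cell of $\mathscr{P}_1$, so each pair is at $||\cdot,\cdot||$-distance $1$ and hence $\dotlr$-related; transitivity of the equivalence $\dotlr$ then chains $(a,0,0)\dotlr(b,t,t)\dotlr(b,t+1,t)\dotlr(b,t+1,t+1)$.)

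I do not expect any genuine obstacle; the two points needing care are the translation in the first paragraph, which uses that $(a,0,0)$ is the unique state outside $B$ together with the galaxy characterization of common knowledge, and the invocation of Proposition~\ref{basic1} along the chain of diagonal states $(b,n,n)$, so that one chain-step is exactly the increment $t\mapsto t+1$ of the message count — mirroring the preceding proof, which used Proposition~\ref{basic2} in the same way. It is worth recording that this statement is precisely the contrapositive of the preceding proposition (take $\tau=t+1$ there), so it could also be dispatched in a single line; the direct argument is included only for symmetry with the previous proof. One may additionally remark that the hypothesis can hold only for finite $t$ — for huge $t$ the distance $||(a,0,0),(b,t,t)||$ is huge, so $(a,0,0)\dotlr(b,t,t)$ fails and $B$ is already common knowledge — so the real content is that the class of message counts at which $B$ fails to be common knowledge is closed under successor, a concrete instance of the overspill phenomenon of Section~\ref{sorites}.
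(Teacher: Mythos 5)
Your argument is correct and follows essentially the same route as the paper's own proof: unpack ``not common knowledge'' at $(b,t,t)$ as $(a,0,0)\dotlr(b,t,t)$, push forward one step along the chain via Proposition~\ref{basic1} (or, as in your parenthetical, via transitivity through the intermediate state $(b,t+1,t)$), and conclude $\gal\bigl((b,t+1,t+1)\bigr)\not\subseteq B$. Your additional remarks --- that the statement is the contrapositive of the preceding proposition and that the hypothesis forces $t$ to be finite --- are accurate but not needed.
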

\begin{proof}
Since event $B$ is not common knowledge under the state $(b,t,t)$, $(a,0,0)\dotlr (b,t,t)$ and thus, $(a,0,0)\dotlr (b,t+1,t+1)$ by Proposition \ref{basic1}.
Consequently, $(a,0,0)\in\gal((b,t+1,t+1))$ and $\gal((b,t+1,t+1))\nsubseteq B$.
Therefore, event $B$ is also not common knowledge under the state $(b,t+1,t+1)$.
\end{proof}

In short, these two properties tell us that it will not be changed whether the event $B$ is common knowledge or not, no matter how many times the number of messages both players sent changes unless it is huge.

\medskip

The result of Proposition \ref{commonknowledge} can be confirmed within the e-mail game setting.
Consequently, the other type of Nash equilibrium in this setting of AST emerges in which both agents can coordinate to gain maximum payoff when huge messages are sent.
The payoff matrix of two coordination games are given as follows:
\begin{center}
\footnotesize
\begin{tabular}{c|c|c|}
\multicolumn{3}{c}{The Game $G_a$}\\[.25cm]
& $A$ & $B$\\
\hline
$A$ & $M,\,M$ & $0,\,-L$\\
\hline
$B$ & $-L,\,0$ & $0,\,0$\\
\hline
\multicolumn{3}{c}{}\\
\multicolumn{3}{c}{state $a$}\\
\multicolumn{3}{c}{probability $1-p$}
\end{tabular}
\qquad\qquad
\begin{tabular}{c|c|c|}
\multicolumn{3}{c}{The Game $G_b$}\\[.25cm]
& $A$ & $B$\\
\hline
$A$ & $0,\,0$ & $0,\,-L$\\
\hline
$B$ & $-L,\,0$ & $M,\,M$\\
\hline
\multicolumn{3}{c}{}\\
\multicolumn{3}{c}{state $b$}\\
\multicolumn{3}{c}{probability $p$}
\end{tabular}
\end{center}
As we saw, only agent 1 knows the true state, and e-mails are sent only when the true state is $B$.
Whenever each agent's computer receives e-mails, it replies back automatically, but fails with provability of $\varepsilon>0$.
Thus, both agents cannot tell whether the opponents succeed or fail to receive the last e-mail.
However, the last proposition holds.

\begin{prop}
There is a Nash equilibrium in which agent $i$ plays $A$ when the agent sent huge number of e-mails and play $B$ otherwise.
\[
S_i(t)\ =\ \begin{cases}
A\quad\text{if }\  t\in \FN\\
B\quad\text{otherwise}
\end{cases}
\]
\end{prop}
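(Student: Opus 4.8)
The plan is to verify the equilibrium condition in its interim form. Writing $T_i$ for the number of e-mails agent $i$ has sent --- which is precisely agent $i$'s private information --- it suffices, by the standard characterization of Nash equilibrium for games of incomplete information, to check that for \emph{every} value $t$ of $T_i$ the prescribed action $S_i(t)$ is a best response to $S_{-i}$ conditional on $T_i = t$; for values $t$ occurring with probability $0$ this requirement is vacuous, so the (genuinely delicate) question of what probability the model assigns to huge-count states never has to be settled. I would also stress at the outset that $S_i$ is a \emph{class} function whose $A$-region is the proper semiset $\FN$, so the profile is an object that does not exist in the Zermelo--Fraenkel model; it coexists with Rubinstein's equilibrium in which both agents always play $A$.

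First I would dispose of the types $t \in \FN$. If agent $1$ has $T_1 = 0$ the state is $a$ and agent $2$ also has $T_2 = 0$, hence plays $S_2(0) = A$; against $A$ the action $A$ pays $M$ in $G_a$ while $B$ pays $-L$, so $A = S_1(0)$ is optimal. If agent $1$ has $T_1 = t$ with $1 \le t \in \FN$ the state is $b$ and agent $2$'s count lies in $\{t-1, t\} \subseteq \FN$, so $S_2 = A$ with certainty; in $G_b$ the action $A$ pays $0$ and $B$ pays $-L < 0$, so $A = S_1(t)$ is optimal. Symmetrically, for agent $2$ with $T_2 = t' \ge 1$ and $t' \in \FN$, agent $1$'s count lies in $\{t', t'+1\} \subseteq \FN$, so $S_1 = A$ and $A$ (payoff $0$) again beats $B$ (payoff $-L$). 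The one finite type whose information straddles the two games is agent $2$ with $T_2 = 0$: here agent $1$'s count is $0$ (state $a$) or $1$ (state $b$), and in either case $S_1 = A$; against $A$ the action $A$ pays $M$ in $G_a$ and $0$ in $G_b$, whereas $B$ pays $-L$ in both, so $A = S_2(0)$ is a best response whatever the posterior on $\{a,b\}$. This dominance at the boundary is exactly the step from which Rubinstein's contagion argument departs.

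Next I would handle the huge types, where the arithmetic of AST does the work. Since a finite set together with one further element is finite, $\FN$ is closed under successor; and since subsets of finite sets are finite, $N \setminus \FN$ is closed under both predecessor and successor. Hence if agent $1$ has huge $T_1 = t$, agent $2$'s count --- an element of $\{t-1, t\}$ --- is again huge, so $S_2 = B$ with certainty; the state is $b$, and in $G_b$ playing $B$ pays $M > 0$ while playing $A$ pays $0$, so $B = S_1(t)$ is optimal. Likewise, if agent $2$ has huge $T_2 = t'$, agent $1$'s count lies in $\{t', t'+1\}$ and is huge, $S_1 = B$, and $B$ (payoff $M$) beats $A$ (payoff $0$) in $G_b$. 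Having checked every value of $T_1$ and of $T_2$, the profile is a Nash equilibrium.

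The payoff comparisons are routine and use only the standing assumptions $M > 0$ and $L > 0$. The point that wants care in the write-up is the opening reduction: one must state cleanly that a profile in which every type plays a conditional best response is a Nash equilibrium, and that types of probability $0$ impose no constraint --- this is what lets us sidestep whether huge-count states are null. The conceptual crux, more than an obstacle, is that $\FN$ is a proper cut closed under both successor and predecessor: this is exactly what keeps the classical contagion from ever reaching the huge region, whereas in the Zermelo--Fraenkel model, where no such cut exists, Rubinstein's induction forces $A$ at every message count. I do not anticipate a real difficulty; the work is the bookkeeping of the two \emph{boundary} transitions and the appeal to these closure properties of $\FN$.
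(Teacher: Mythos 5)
Your proof is correct and follows essentially the same route as the paper's: a type-by-type interim best-response check resting on the fact that $\FN$ is closed under successor and $N\setminus\FN$ under predecessor, so an agent with a finite (resp.\ huge) count is certain the opponent's count is finite (resp.\ huge) and hence that the opponent plays $A$ (resp.\ $B$). You are in fact more careful than the paper at the finite types: the paper asserts the payoff to $A$ is $M$ for every finite $t$, which holds only at $t=0$ (state $a$); for $1\le t\in\FN$ the state is $b$ and $A$ earns $0$ against the opponent's $A$, still beating the $-L$ from $B$ --- your separate treatment of this case and of agent 2's boundary type $t'=0$ is the right way to write it.
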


\begin{proof}
When agent 1 receives finite number of e-mails, say $t$, then $t-1$ is also finite.
Therefore, 2 plays $A$ regardless of whether agent 2 succeed or fail to receive 1's $t$-th e-mail, and 1's utility playing $A$ is $M$ while deviating to $B$ reduces it to $-L$.
The argument is same for agent 2.

On the other hand, when agent 1 receives infinite number of e-mails, say $\tau$, then $\tau-1$ is also infinite.
Therefore, 2 plays $B$, and 1's utility playing $B$ is $M$ while deviating to $A$ reduces it to $0$.
The argument is same for agent 2.
\end{proof}

\section{Concluding Remarks}
We showed that once we adopt AST, it is possible to model the situation where informations of an event are shared only through not so much stable e-mail systems and become common knowledge, while it seems very hard, if not impossible, as long as we stick to the framework of ZF set theory.

As we saw, it is enabled by the way AST treats infinity: negation of the axiom of infinity for sets but classes.
However, AST is not the only framework which enables us to model this kind of situation.
In fact, this feature is shared by at least two other frameworks:
non-standard analysis \cite{rob66} and alternative mathematics \cite{tjst}.

For those who know non-standard analysis, it may seem quite natural and easy to translate the arguments here to terminology of non-standard analysis.
One can translate $\FN$ to $\mathbb{N}$, a set of standard natural number, which is external, and $N$ to ${}^* \mathbb{N}$, a non-standard extension of $\mathbb{N}$, which is internal\footnote{
Vop\v{e}nka \cite{vop2} called the process of getting the set of all natural number $\mathbb{N}$ of ``Cantor's Set Theory'' critically as ``complete sharpening'', and praised the attempt of non-standard analysis, as it shares the goal with AST, to extend natural numbers beyond this completely sharpened horizon of $\mathbb{N}$.
Despite the fact that they share the goal, the direction they extend the horizon differs.
While AST moves the horizon toward inside by dividing natural numbers into two subclasses, that is, $\FN$ and $N\setminus \FN$, non-standard analysis extend natural numbers beyond the horizon $\mathbb{N}$ to non-standard one, that is, ${}^*\mathbb{N}$.
As a result, every natural number of ZF set theory remains finite in non-standard analysis, while a part of them are considered to be infinite in AST.
}.

Being aware of that elements of ${}^*\mathbb{N}\setminus \mathbb{N}$ are infinite and reachable from $\emptyset$ by induction, one can model the situation also within a framework of non-standard analysis exactly the same way as we did, but, as long as we know, none has been done\footnote{Vop\v{e}nka criticize this inactive attitude of non-standard analysis writing ``the relation of Non-standard Analysis to Cantor's Set Theory is that of vassal, which is also reflected in the name ``non-standard'' natural numbers and so on''.}.
This kind of research may open the door to a whole new way to deal with the phenomena involving infinitely many repetition.

On the other hand, alternative mathematics is fairly a new framework originated by Tsujishita \cite{tjst}, aimed at founding a system of non-standard model of arithmetic directly, without recourse to infinite set theory.
To avoid unnecessary complications, numbers and its arithmetic are simply presupposed, and essential constructions are very similar to those of AST but simpler and, in some respect, more restrictive\footnote{
One example is a condition that makes up classes.
To be a class, every element of a collection of objects must have ``distinctiveness''.
Semisets and proper classes are defined not to have this distinctiveness, so that a collections of them cannot make up a class. 
%
}.

Thus, it is also easy to translate the arguments here to alternative mathematics, namely by replacing $\FN$ to $\mathbb{N}_{acc}$, a class of all accessible natural numbers, and $N$ to $\mathbb{N}$, a class of all natural numbers.

As we saw briefly, the arguments we propose here can be dealt in multiple frameworks, but only AST have been made a number of interesting researches based upon this perspective.
This is because AST is developed to offer the very tools to investigate these kinds of phenomena right from the beginning as compared with non-standard analysis, and old enough to develop various tools as compared with alternative mathematics.
So, trying to investigate the phenomena involving infinity or indefiniteness, say, endless operations or huge crowds, one can benefit a lot from their achievements and insights.
Huge amount of treasures are waiting to be excavated.

\appendix
\def\thesection{Appendix \Alph{section}}
\section{Common knowledge as a meet}
In the Section \ref{email}, we reviewed that the event is originally defined as common knowledge in Aumann \cite{aumann76} when it include the event $\mathbf{P}_1\wedge\mathbf{P}_2(\omega)$ which is an element of the meet $\mathscr{P}_1\wedge\mathscr{P}_2$.
We didn't discuss how the definition is stated in this manner in AST environment in the main body.
In this appendix, we supplement it.

The conclusion is drawn from the next proposition.

\begin{prop}
In AST setting, the meet of the group $G$, $\bigwedge_{i\in G}\mathscr{P}_i$ coincides with the class of all galaxies, or $\Omega/{\dotlr}$ (the quotient of $\Omega$ by $\dotlr$).
\[
\bigwedge_{i\in G}\mathscr{P}_i\ =\ 
\left\{\gal(\omega);\,\omega\in\Omega \right\}
\]
\end{prop}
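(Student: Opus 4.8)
The statement equates two partitions of $\Omega$, so the plan is to show that the equivalence relation whose classes are the meet-cells is exactly $\dotlr$; the equality of partitions then follows, since the $\dotlr$-class of a state $\omega$ inside $\Omega$ is by definition $\gal(\omega)$. Throughout I read the meet in the AST-appropriate (subjective) sense: two states lie in a common meet-cell iff each is subjectively reachable from the other, i.e.\ connected to it by a chain of partition cells $\mathcal{P}^1,\dots,\mathcal{P}^k$ as in Section~\ref{email} but now with $k\in\FN$. Equivalently, $\bigwedge_{i\in G}\mathscr{P}_i$ is the quotient of $\Omega$ by the finitary closure of the relations ``belonging to a common $\mathscr{P}_i$-cell'', which is precisely the generating procedure behind $\dotlr$ built in Section~\ref{email}; this is exactly where AST parts company with ZF, since one is not allowed to iterate transitivity through a huge chain.

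First I would record the dictionary between chains and links. Arguing as in the earlier proposition that mutually reachable states are joined by an $n$-step link, a chain of partition cells of finite length from $\omega$ to $\omega'$ witnesses $\omega'\in L_G^n(\{\omega\})$ for some $n\in\FN$, and conversely $\omega'\in L_G^n(\{\omega\})$ with $n\in\FN$ supplies a finite chain; hence $\omega'$ is subjectively reachable from $\omega$ iff $\omega'\in\bigcup_{n\in\FN}L_G^n(\{\omega\})$. On the other hand, by the construction of $\dotlr$ from the metric $||\cdot,\cdot||$ (exactly as in the theorem of Section~\ref{sorites}), $\omega\dotlr\omega'$ holds iff $||\omega,\omega'||<2^n$ for some $n\in\FN$; since $\FN$ is closed under $n\mapsto 2^n$ and $||\omega,\omega'||=m$ means $\omega'\in L_G^m(\{\omega\})\setminus L_G^{m-1}(\{\omega\})$, this is the same as $||\omega,\omega'||\in\FN$, i.e.\ again as $\omega'\in\bigcup_{n\in\FN}L_G^n(\{\omega\})$. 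Combining the two descriptions, subjective reachability and $\dotlr$ coincide.

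It then remains to observe that subjective reachability is indeed an equivalence, so that its classes partition $\Omega$: reflexivity, symmetry and transitivity are immediate from $||\omega,\omega||=0$, the symmetry of $||\cdot,\cdot||$ (Lemma~\ref{exchange}), and the triangle inequality for $||\cdot,\cdot||$ together with the closure of $\FN$ under addition. Therefore the meet-cell of $\omega$ is $\{\omega'\in\Omega;\ \omega\dotlr\omega'\}=\gal(\omega)$, which gives $\bigwedge_{i\in G}\mathscr{P}_i=\Omega/\dotlr=\{\gal(\omega);\ \omega\in\Omega\}$; as a by-product, Aumann's requirement ``$\mathbf{P}_1\wedge\mathbf{P}_2(\omega)\subseteq E$'' for common knowledge at $\omega$ is seen to be the same as the criterion $\gal(\omega)\subseteq E$ used in the body.

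The main obstacle here is conceptual rather than computational: one must justify that the AST rendering of the meet is the one built from finite chains only --- equivalently, the $\sigma$-equivalence generated by the partitions rather than their full transitive closure --- because the latter, as the footnote in Section~\ref{email} shows, collapses to $\{\Omega\}$ and would make the proposition false. Once that reading is fixed, the only technical point needing care is the arithmetic closure of $\FN$ (under addition and under $n\mapsto 2^n$), which is what makes ``finitely many link-steps'' and ``metric distance below $2^n$ for some finite $n$'' single out the same set of states.
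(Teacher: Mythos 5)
Your proposal is correct and rests on the same crux as the paper's own proof: the meet-cell of $\omega$ consists of the states joined to $\omega$ by a chain of finitely many partition cells, which by your chain--link--metric dictionary is exactly $\{\omega'\in\Omega;\,||\omega,\omega'||\in\FN\}=\gal(\omega)$. If anything you are more complete than the paper, which only checks that galaxies form a partition and that nothing strictly finer can serve as a common coarsening, leaving implicit both the verification that each $\mathscr{P}_i$ refines the galaxy partition and the interpretive point you rightly make explicit --- that the meet must be taken in the finite-chain sense, since its full transitive closure collapses to $\{\Omega\}$ and would falsify the proposition.
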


\begin{proof}
First we show that the class of all galaxies is partition.

By definition of galaxies, two galaxies of different states $x$ and $y$ are identical or mutually disjoint.
Provided that they are not disjoint, they share at least one common element $z\in \gal(x)\cap\gal(y)$ which satisfies both $z\dotlr x$ and $z\dotlr y$.
By transitivity, $x\dotlr y$ follows.
Consequently, $\gal(x)=\gal(y)$ follows.

Since the class $\left\{\gal(\omega);\,\omega\in\Omega\right\}$ is defined for all states, the union of all the elements of the class trivially coincides with $\Omega$.

Secondly, we show that this is the smallest class which is strictly coarser than $\mathscr{P}_i$ for all $i\in G$.
Provided that there exists $\mathscr{A}$ which is finer than $\left\{\gal(\omega);\,\omega\in\Omega\right\}$, there must be an event $A\in\mathscr{A}$ which satisfies $A\subset \gal(\omega)$ for some $\omega\in\Omega$.
It means there must be at least one state $x\in \gal(\omega)\setminus A$ which satisfies $\mathcal{P}^n=\mathbf{P}_i(x)\in\mathscr{P}_i$ for some $n\in\FN$ and $i\in G$.
It means $x$ has an $n$-step link from $\omega$, which is contradiction.
\end{proof}

Clearly from the proposition, the event $E$'s being common knowledge can be stated the same way as Aumann \cite{aumann76} as $\mathbf{P}_1\wedge\mathbf{P}_2(\omega)\subseteq E$, which coincides with the definition we stated in the main body since $\mathbf{P}_1\wedge\mathbf{P}_2(\omega)=\gal(\omega)$.

It is worth mentioning that our definition of common knowledge and that of the Aumann differ only when $\gal(\omega)$ is proper semiset.
When the $\gal(\omega)$ is a set, they coincide.
It can be said alternatively that \textit{subjective} and \textit{objective} perception differ when it takes hugely many steps to get objective evidence that we both agree that we agree.

\section{Knowledge operator}\label{knowledge}
One of the major ways to analyze structure of knowledge is via knowledge operator \cite{nielsen1984common,fagin2004reasoning}.
This operator can be represented making use of the link function.
We briefly sketch how.

A Knowledge operator is defined in \cite{nielsen1984common,fagin2004reasoning} as:
\[
K_i(A) \equiv \left\{\omega\in\Omega;\, \mathbf{P}_i(\omega)\subseteq A\right\}
\]
The statement $\omega\in K_{i}(A)$ means that the agent $i$ knows the event $A$ occurs at $\omega$.

The event indicated by this operator can also be interpreted as the set of states which has no link from outside of $A$.
In other words, the agent $i$ cannot build any link from within to the outside of $A$.

\begin{prop}
The event indicated by $K_i(A)$ coincides with the event which has no link outside of $A$:
\[
K_i(A)\ =\ \overline{L_i\left(\overline{A}\right)}
\]
where $\overline{A}$ is complement of $A$, that is, $\overline{A}\equiv \Omega\setminus A$.
\end{prop}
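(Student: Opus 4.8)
The plan is to prove the set equality by a short chain of elementary equivalences, the only substantive ingredient being the symmetry of membership in partition cells. First I would isolate the auxiliary fact that for all $x,y\in\Omega$ one has $y\in\mathbf{P}_i(x)$ if and only if $x\in\mathbf{P}_i(y)$: if $y\in\mathbf{P}_i(x)$ then $y$ lies in the same cell of $\mathscr{P}_i$ as $x$, so $\mathbf{P}_i(y)=\mathbf{P}_i(x)$ and hence $x\in\mathbf{P}_i(y)$; the converse is symmetric. Equivalently, $y\in\mathbf{P}_i(x)$ holds exactly when $\mathbf{P}_i(x)=\mathbf{P}_i(y)$. This is where the hypothesis that $\mathscr{P}_i$ is a partition is used, and it is the only place any care is needed.

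Next I would unwind the right-hand side for a fixed $\omega\in\Omega$. Membership $\omega\in\overline{L_i(\overline{A})}$ means $\omega\notin L_i(\overline{A})$, which by the definition of $L_i$ means there is no $x\in\overline{A}$ with $\omega\in\mathbf{P}_i(x)$; that is, for every $x$, $\omega\in\mathbf{P}_i(x)$ implies $x\in A$. Applying the auxiliary fact to replace $\omega\in\mathbf{P}_i(x)$ by $x\in\mathbf{P}_i(\omega)$, this condition becomes: every $x\in\mathbf{P}_i(\omega)$ lies in $A$, i.e. $\mathbf{P}_i(\omega)\subseteq A$. By the definition of $K_i$ this says precisely $\omega\in K_i(A)$. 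Reading the chain of equivalences both ways gives $K_i(A)=\overline{L_i(\overline{A})}$.

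I do not anticipate a genuine obstacle here; the proof is a direct computation. The one thing to watch is not to conflate the cell $\mathbf{P}_i(x)$ with the statement that $\omega$ belongs to it, which is exactly why I would state and use the symmetry lemma first rather than inline. As a sanity check one may also note that this is consistent with the relation $L_i(A)=\Omega\setminus K_i(\Omega\setminus A)$ mentioned earlier: substituting $\overline{A}$ for $A$ there and complementing both sides yields the same identity, so the two formulations agree.
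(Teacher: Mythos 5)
Your proof is correct and follows essentially the same route as the paper: a direct unwinding of the definitions of $K_i$ and $L_i$, with the partition structure of $\mathscr{P}_i$ doing the only real work. The paper phrases the middle step as the equivalence of $\mathbf{P}_i(x)\subseteq A$ with $\mathbf{P}_i(x)\cap L_i(\overline{A})=\emptyset$, which rests on exactly the cell-membership symmetry you isolate as a lemma, so your version just makes that implicit step explicit.
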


\begin{proof}
Suppose $x\in K_i(A)$.
It is equivalent to $\mathbf{P}_i(x)\subseteq A$ and therefore to
\[
\mathbf{P}_i(x)\cap L_i\left(\overline{A}\right)\ =\ \emptyset.
\]
Finally, it is equivalent to $x\in 
\overline{L_i\left(\overline{A}\right)}$.
\end{proof}

The operator $K_{G}(A)$ is defined as:
\[
K_{G}(A)\ \equiv\ \bigcap_{i\in G}K_i\left(A\right)\ =\ \bigcap_{i\in G}\overline{L_i\left(\overline{A}\right)}
\]
Then the class of all states at which every agent knows that $A$ is common knowledge can be stated as:
\[
C_{G}(A)\ \equiv\ \left\{\omega\in\Omega;\, \left(\forall x\in \overline{A}\right)\left(\neg(x\dotlr \omega)\right)\right\}
\]
It means that $C_{G}(A)$ is subjectively unreachable from the outside of $A$.

Finally, the event $A$'s being common knowledge under the state $\omega$ comes to coincide with the fact that the state is included in $C_{G}(A)$.
\begin{prop}
$\text{gal}(\omega)\subseteq A$ iff 
$\omega\in \text{C}_{G}(A)$
\end{prop}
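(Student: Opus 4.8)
The plan is to unfold both definitions and use nothing beyond the fact that $\dotlr$ is an equivalence relation, in particular that it is symmetric. Recall that $\gal(\omega)=\{y\in\Omega\,;\,\omega\dotlr y\}$ and that, by the definition stated just above, $C_G(A)=\{\omega\in\Omega\,;\,(\forall x\in\overline{A})(\neg(x\dotlr\omega))\}$, where $\overline{A}=\Omega\setminus A$. So the claim is a purely logical reformulation, and I would prove each implication by contraposition.

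First I would establish the forward direction ($\gal(\omega)\subseteq A\Rightarrow\omega\in C_G(A)$) contrapositively. Assume $\omega\notin C_G(A)$; then there is some $x\in\overline{A}$ with $x\dotlr\omega$. By symmetry of $\dotlr$ this gives $\omega\dotlr x$, hence $x\in\gal(\omega)$. Since $x\notin A$, this exhibits an element of $\gal(\omega)$ outside $A$, so $\gal(\omega)\not\subseteq A$.

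For the converse ($\omega\in C_G(A)\Rightarrow\gal(\omega)\subseteq A$), again argue contrapositively: suppose $\gal(\omega)\not\subseteq A$ and pick $y\in\gal(\omega)\setminus A$. Then $\omega\dotlr y$, hence $y\dotlr\omega$ by symmetry, and $y\in\overline{A}$; thus $y$ witnesses the failure of the defining condition of $C_G(A)$, i.e. $\omega\notin C_G(A)$. Combining the two contrapositives yields the stated equivalence.

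The only point requiring any care is the appeal to symmetry, and it is legitimate: $\dotlr$, being a $\sigma$-equivalence (and, in the present setting, a sorites relation), is by definition an equivalence relation, so it is symmetric; alternatively, symmetry of the union is inherited directly from the symmetry of each generating relation $R_n$. No overspill or largeness phenomenon enters here, so there is no genuine obstacle — the substantive content of the result lies entirely in the earlier identifications $\mathbf{P}_1\wedge\mathbf{P}_2(\omega)=\gal(\omega)$ and $K_i(A)=\overline{L_i(\overline{A})}$, which are what make $C_G(A)$ the appropriate common-knowledge operator in the first place.
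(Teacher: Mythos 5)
Your proof is correct and is essentially the paper's own argument: the paper simply asserts in one sentence that $\gal(\omega)\subseteq A$ means $\omega$ is subjectively unreachable from $\overline{A}$, which is exactly the defining condition of $C_G(A)$. You merely make explicit the unfolding of the two definitions and the (legitimate) appeal to symmetry of $\dotlr$, so nothing further is needed.
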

\begin{proof}
$\text{gal}(\omega)\subseteq A$ is equivalent to the state $\omega$'s being subjectively unreachable from the outside of $A$.
It is exactly what the statement $\omega\in \text{C}_{G}(A)$ says.
\end{proof}

\bibliographystyle{plain}
\bibliography{ref} 

\end{document}